\DeclareMathOperator*{\argmax}{arg\,max}
\DeclareMathOperator*{\supp}{supp}
\pgfplotsset{compat=1.16}
\newtheorem{theorem}{Theorem}
\newtheorem{proposition}{Proposition}
\newtheorem{lemma}[theorem]{Lemma}
\theoremstyle{definition}
\newtheorem{definition}{Definition}[section]
\theoremstyle{remark}
\newtheorem{assumption}{Assumption}[section]
\begin{document}

\title{Using Information to Amplify Competition}
\author{Wenhao Li\thanks{I thank my advisor, Nageeb Ali, for his tremendous help on this paper. I am also grateful to Mark Armstrong, Kalyan Chatterjee, Nima Haghpanah, Yuhta Ishii, Navin Kartik, Vijay Krishna, Rohit Lamba, Xiao Lin, Vasundhara Mallick, Ron Siegel, Ece Teoman, and Kai Hao Yang for excellent and insightful comments.
        }
\thanks{Department of Economics, The Pennsylvania State University,
University Park, PA 16801 (wul260@psu.edu)}}
\maketitle
\begin{abstract}
    I characterize the consumer-optimal market segmentation in competitive markets where multiple firms selling differentiated products to consumers with unit demand. This segmentation is public---in that each firm observes the same market segments---and takes a  simple form: in each market segment, there is a dominant firm favored by all consumers in that segment. By segmenting the market, all but the dominant firm maximally compete to poach the consumer's business, setting price to equal marginal cost. Information, thus, is being used to amplify competition. This segmentation simultaneously generates an efficient allocation and delivers to each firm its minimax profit. 
   
\end{abstract}

\thispagestyle{empty} 

\newpage
\setcounter{page}{1}
\section{Introduction}
In an important paper, \cite*{bbm} (henceforth BBM) describe the limits of price discrimination. Studying the market interaction between a monopolist and a consumer, BBM characterize all possible combinations of consumer surplus and firm profits that can arise as equilibria outcomes for some information structure. A particularly important outcome is the consumer-optimal market segmentation: they show that it is possible to segment the market so that trading is efficient without improving the firm's profit beyond what it achieves without price discrimination. In this segmentation, the consumer accrues all gains from the market segmentation. 

This paper finds the consumer-optimal market segmentation in a competitive market with differentiated products. Each consumer chooses between buying the products of two or more firms and has different valuations for the product of each firm.\footnote{Some of these valuations can be lower than the value of the outside option, and so some of these consumers can be considered ``captive.''} The product differentiation may have both horizontal and vertical components. In this setting, how should one segment the market to maximize consumer surplus? 

Answering this question raises two challenges. The first is that the segmentation must be done so that each firm's equilibrium price is a best-response not only to its information but also to the other firms' prices. These market prices may be asymmetric and some market segments may lack equilibria in pure strategies. Therefore, in principle, characterizing the consumer-optimal segmentation may require accounting for the consumer's payoffs in all possible mixed strategy equilibria for all possible markets. The second challenge is that it isn't \emph{ex ante} obvious as to whether the consumer-optimal market segmentation is achieved using a public segmentation where all firms observe the same information or a private segmentation where each firm observes different information. Thus, proceeding from the consumer-optimal segmentation in a monopolistic setting to that of a competitive market requires overcoming these two obstacles. 

I characterize the consumer-optimal market segmentation and show that it is achieved using a public segmentation. I sidestep the above issues by characterizing a public segmentation that guarantees that the allocation is efficent and each firm obtains its minimax profit. This segmentation is not merely a technical device but reveals an underlying economic logic: it amplifies competition using information. 

To illustrate this logic, suppose for now that each firm has identical marginal cost.\footnote{I relax this assumption in the analysis.} In each segment of the consumer-optimal segmentation, it is commonly known which firm is the consumer's favorite. In other words, in each market segment, there is a firm $i$ so that for all consumer types in that segment, the consumer's valuation for the good of firm $i$ exceeds her valuation for that of any other firm. Thus, each market segment features only vertical differentiation. All of the non-favorite firms then realize that to poach the consumer's business, they have to charge very low prices. In equilibrium, all such non-favorite firms price at marginal cost. Their offers become the consumers' outside option if consumers choose not to purchase from firm $i$. Firm $i$ then charges its ``local monopoly price,'' assuming that outside option. At this stage, the competitive market with differentiated products reduces to the problem of a monopolist with this specific outside option. After this reduction, I invoke the uniform profit preserving extremal segmentation from BBM to find a segmentation that is allocatively efficient and in which firm $i$ obtains its minimax payoff. Hence, this is the consumer-optimal market segmentation. 

How does this affect firm profits? In BBM, the monopolist is never worse off than its payoff in the unsegmented market. This is because it can ignore the market segmentation and charge its \emph{ex ante} optimal uniform price. By contrast, in competitive markets, each firm may be worse off relative to their equilibrium profits in the unsegmented market.\footnote{The previous argument does not apply in competitive markets because even though firm $i$ could choose to ignore the segmentation, its competitors may not do so, and hence, firm $i$'s profits are no longer that of the unsegmented market.} Slicing the market into vertical components guarantees that every firm other than the favorite is competing maximally for the consumer's business, which compels the favorite firm to reduce its price. 

The paper proceeds as follows. Section 2 uses two examples to illustrates the main insights. Section 3 develops the general model and proves the main result of the paper. Section 4 concludes. The remainder of the introduction briefly discusses the most closely related prior work. 

\paragraph{Related Literature:} Several recent papers study the potential role of information design in competitive markets with production differentiation, and I discuss them below. 

The three most closely related papers are \cite{armstrong-vickers}, \cite*{bbm-note}, and \cite*{eg}. \cite{armstrong-vickers} consider a setting in which consumers non-unit demand and product differentiation is in the form of captive consumers. They show that when firms are symmetric, an unsegmented market is consumer-optimal among all public segmentations. The main difference between their analysis and mine is that I consider a general form of product differentiation with consumers that have unit demand.\footnote{In the symmetric setting studied by \cite{armstrong-vickers}, were consumers to have unit demand, we can apply the analysis of my paper. In that case, my analysis would show that the fully revealing segmentation is also consumer-optimal. My main result would then imply that both the unsegmented market and the fully segmented market are consumer-optimal among all segmentations, both public and private.} \cite*{bbm-note} study the same setting as \cite{armstrong-vickers}, but with unit demand. They show that splitting the market into ``nested'' markets where only one firm has captive consumers is producer-optimal among all public segmentations. In a different setting, \cite*{eg} characterize market segmentations that are efficient in which firms capture all of the surplus. 

The idea that information can be used to amplify competition features in \cite*{alv}. They study both monopolistic and differentiated competitive markets in which consumers voluntarily choose whether to disclose hard information about their preferences. They show that a consumer can selectively disclose information to firms to amplify competition between them. 

\cite{az} contains the novel idea that to amplify competition, consumers may choose to withhold information from themselves. They study a setting in which an initially uninformed consumer chooses how much to learn about her preferences. They find that consumers may commit to learn little so as to maintain a high degree of market competition.

\section{Two Examples}

I use two examples to illustrate the logic of my results. First, I show that the problem of Bertrand duopoly with purely vertical differentiation can be reduced to the monopolistic case considered by BBM. I then show how to segment a Bertrand duopoly market with purely horizontal differentiation into markets in which there is pure vertical differentiation. 

\subsection{Bertrand Duopoly with Vertical Differentiation}\label{Section-Vertical}
Consumers can buy a good from one of two firms, S (``superior'') and I (``inferior''). If a consumer does not buy from either firm, she obtains a payoff of zero. Half of the market comprises consumers of type $H$ and the other half is type $L$. Type $H$ has a value of $7$ for the product from firm S, and type $L$ has a value of $3$ for that good. Both types value the product from firm I at value $1$. I denote a consumer's valuation by $v\equiv (v_{S}, v_{I})$. Each firm incurs a marginal cost of $0$. The firms set prices simultaneously, and these prices are restricted to be non-negative.\footnote{Throughout this paper, I assume that firms cannot put positive probability on prices strictly below marginal cost. I describe reasons for this restriction in \Cref{Section-Model}.}

Let me first describe what happens if both firms observe the consumer's type; i.e., we segment the market into two segments, one that comprises only type $H$ consumers and the other comprises only type $L$ consumers. In each market, firms engage in Bertrand competition; hence, firm $I$ prices at $0$, firm $S$ charges the valuation difference, and consumers break ties in favor of firm $S$. The allocation is efficient, but firm $S$ reaps all of these efficiency gains. 

Relative to that benchmark, the consumer-optimal segmentation reveals partial information to both firms. This segmentation is shown in \Cref{table}: it splits the market into markets $x_{1}$ and $x_{2}$, where in market $x_{1}$, a third of consumers are type $H$, and in market $x_{2}$, all consumers are type $H$. 

\begin{table}[h]
\centering
\begin{tabular}{cccc}

Segment & $(7,1)$ &    $(3,1)$ & Mass  \\

\hline

$x_{1}$ & $\frac{1}{3}$	&	$\frac{2}{3}$  & $\frac{3}{4}$ \\[0.1cm]

$x_{2}$ & $1$	&	$0$  & $\frac{1}{4}$ \\[0.1cm]

\hline
$x_{*}$ & $\frac{1}{2}$	&	$\frac{1}{2}$  & \\[0.1cm]

\end{tabular}
\caption{Consumer-optimal Segmentation for Vertically Differentiated Market }
\label{table}
\end{table} 

What is the resulting equilibrium? Firm I always charges a price of $0$. In market $x_1$, firm S charges a price of $2$, and in market $x_2$, it charges a price of $6$. Consumers maximize their payoffs, breaking ties in favor of firm S. Firm S earns an expected payoff of $3$, firm I earns an expected payoff of $0$, and neither firm has any strictly-profitable deviations. 

This segmentation is consumer-optimal because it allocates consumers efficiently, and it leaves firms with equilibrium profits that are in fact each firm's \emph{minimax profit} for any market segmentation and equilibrium. For firm I, this is obvious. For firm S, observe that the worst-case scenario is firm I charging its lowest possible price to each consumer type, which is exactly what happens here. In that case, a feasible strategy for firm S is to ignore the segmentation and charge a price slightly below $6$, which guarantees that all high-value consumers buy from firm S. A key part of my argument is constructing these minimax profits for markets more generally and showing that they reduce to this simple case of vertically-differentiated products.

\subsection{Bertrand Duopoly with Horizontal Differentiation}

In this section, I show how to reduce a market with pure horizontal differentiation to one in which the insights from above apply. Now each consumer can purchase the product from either firm L or R. Her value from buying the good from firm $i$ is $v_i$ and we use $v\equiv (v_L,v_R)$ to denote the relevant vector of values. Suppose that $v$ is uniformly drawn from the set $\{(7,1),(3,1),(1,3),(1,7)\}$. Thus, there are extreme types who strongly prefer the product of one firm to that of the other, as well as types who have a more moderate preference. Firms again price simultaneously, have a marginal cost of $0$, and never price strictly below marginal cost. The consumer can choose to purchase the product from at most one firm, and if she decides to buy from neither firm, she obtains a payoff of zero. 

What happens in an unsegmented market? Now there exists an equilibrium in which both firms price at $7$,  more central types choose not to purchase from either firm, and extreme types buy from their favorite firms. Both firms obtain a profit of $\frac{7}{4}$ in equilibrium, and each consumer receives a payoff of zero. 

The consumer-optimal market segmentation in this setting uses four segments.  It is constructed by first slicing our horizontal market into two vertical markets: a market with type $(7,1),(3,1)$, and another market with type $(1,3),(1,7)$.  I then apply segmentation in section 2.1 to these two vertical markets because both of them are isomorphic to the market in \Cref{Section-Vertical}.  I depict this segmentation in \Cref{table2}. In market $x_{1}$, a third of consumers are of the extreme type $(7,1)$ and the rest are of the central type $(3,1)$; symmetrically, in market $x_{4}$, a third of consumers are of the extreme type $(1,7)$ and the rest are of the central type $(1,3)$.  In market $x_{2}$, all consumers are of the extreme type $(7,1)$, and symmetrically, in market $x_{3}$, all consumers are of the extreme type $(1,7)$. Notice that even though the unsegmented market has horizontal differentiation, each of these four market segments features purely vertical differentiation. In markets $x_1$ and $x_2$, it is commonly known that firm L (firm R) is the superior (inferior) firm and in markets $x_3$ and $x_4$, it is commonly known that firm R (firm L) is the superior (inferior) firm.

\begin{table}[h]
\centering
\begin{tabular}{cccccc}

Segment & $(7,1)$ & $(3,1)$  & $(1,3)$ & $(1,7)$  & Mass  \\

\hline

$x_{1}$ & $\frac{1}{3}$	&	$\frac{2}{3}$ &	$0$ &	$0$ & $\frac{3}{8}$ \\[0.1cm]

$x_{2}$ & $1$	&	$0$ &	$0$  &	$0$ & $\frac{1}{8}$ \\[0.1cm]

$x_{3}$ & $0$	&	$0$ &	$0$  &	$1$ & $\frac{1}{8}$ \\[0.1cm]

$x_{4}$ & $0$	&	$0$ &	$\frac{2}{3}$ &	$\frac{1}{3}$ & $\frac{3}{8}$ \\[0.1cm]

\hline
$x_{*}$ & $\frac{1}{4}$	&	$\frac{1}{4}$ &	$\frac{1}{4}$ &	$\frac{1}{4}$  & \\[0.1cm]

\end{tabular}
\caption{Consumer-optimal Segmentation for Horizontally Differentiated Market }
\label{table2}
\end{table}

This segmentation has the following equilibrium. In markets $x_1$ and $x_2$, firm R prices at $0$. In market $x_1$, firm L charges a price of $2$. In market $x_2$, firm L charges a price of $6$. In both of these markets, consumers purchase the product from firm L, breaking ties in its favor. Interchanging the roles of firms L and R, equilibrium behavior is symmetrical in markets $x_3$ and $x_4$. 

Observe that both firms earn an expected payoff of $\frac{3}{2}$, and by construction, have no incentive to deviate in their pricing strategy in all markets. I argue that $\frac{3}{2}$ is each firm's minimax profit: the worst scenario for each firm $i$ is the other firm charging its minimum possible price and firm $i$ having no information about the consumer's tastes. In such a case, firm $i$ can guarantee itself a profit close to $\frac{3}{2}$ by charging a price slightly below $6$, but cannot guarantee itself any profits higher than $\frac{3}{2}$. Because this segmentation allocates consumers efficiently and each firm's equilibrium profit coincides with its minimax profit, this segmentation is consumer-optimal.

\section{Model and Results}\label{Section-Model}

I study a market with $N\geq 2$ firms. Firms produce differentiated products with potentially heterogeneous marginal costs. The marginal cost of firm $i$ is $c_i\geq 0$, and I denote the vector of costs by $c\equiv (c_{1},...,c_{N})$.

There is a continuum of consumers in the unsegmented market, and the total mass is normalized to $1$. Each consumer demands at most one unit of the product, and she can obtain this product from any firm. A consumer's valuation for firm $i$'s product is $v_i$ and a consumer's type is summarized by the payoff vector $v\equiv(v_{1},...,v_{N})$. If a consumer does not purchase from any firm, she obtains a payoff of zero.  

I make two assumptions about consumer valuations. First, each consumer's valuation for the product of any firm is bounded from above by a positive number $\bar{v}$. Second, for any consumer, there is at least one firm that can serve her efficiently; i.e., for every consumer type $v$,  $\max_{i=1,..,N} \{v_{i}-c_{i}\}>0$. \footnote{This is for simplicity. If it were efficient for some consumer types to choose the outside option, then this outside option can be modeled as a ``dead'' firm who offers positive value to some consumers but its price is exogenously fixed at zero. Then  similar analysis applies.} Hence, the set of possible valuations takes the following form:
$$V \subseteq \{v\in \mathbb{R}^{N}|0<\max_{i=1,..,N} \{v_{i}-c_{i}\}\ \& \ v_{i}\leq \bar{v} \}$$. 

A market is a Borel probability measure on $V$. The set of all possible markets is defined as $\Delta(V)$.  For $ x\in \Delta(V)$, $x(B)$ is the measure of set $B\in \mathcal{B}(V)$, where $\mathcal{B}(V)$ is the collection of all Borel subset of $V$. In my analysis, I fix an original market $x_{*}\in \Delta(V)$.

To describe all possible segmentations in the market, I follow \cite{nw2020} to study a set of incomplete information games. Each firm $i$ has signal space $S_{i}$. An information structure is a joint probability measure $\psi \in \Delta(\times^{i=N}_{i=1}S_{i}\times V)$ that satisfies Bayes plausibility requirement $\psi_{v}=x_{*}$. Here, I use $\psi_{v}$ to denote the marginal distribution of $v$, and I use $\psi_{i}$ to denote marginal distribution of firm $i$'s signal.  Hence the set of possible information structure is defined as:
$$\Psi \equiv \{\psi \in \Delta(\times^{i=N}_{i=1}S_{i}\times V)|\psi_{v}=x_{*}\}.$$
For each information structure $\psi \in \Psi$, I study the following game $\Gamma(\psi)$: at stage one, $(s_{1},...,s_{N},v)$ is drawn form $\psi$, each firm $i$ privately observes $s_{i}$ and the consumer privately observes $v$; at stage two, firms simultaneously set prices; at stage three, the consumer chooses which firm's product to purchase (or not purchase any firm's product). 

In game $\Gamma(\psi)$, firm $i$ observes signals distributed as $\psi_{i}$. After observing each $s_{i}\in S_{i}$, firm $i$ hold some posterior belief about the consumer's valuation. Denote the distribution of firm $i$'s posterior believes by $\tilde{\psi}_{i}$, Bayes plausibility requires the follows:
$$\int_{x \in \Delta(V)} x(B) \tilde{\psi}_{i}(d x)=x_{*}(B), \forall B \in \mathcal{B}(V).$$
Hence each firm observes a \emph{segmentation} of original market $x_{*}$. Following BBM, the set
of segmentations of $x_{*}$ is defined as: 
\begin{equation*}
    \Sigma \equiv\left\{\sigma^{\prime} \in \Delta(\Delta(V)) | \int_{x \in \Delta(V)} x(B) \sigma^{\prime}(d x)=x_{*}(B), \forall B \in \mathcal{B}(V)\right\}
\end{equation*}

The set of public information structures $P\subset \Psi$ contains information structure $\psi$ with the following property: For each $(s_{1},...,s_{N},v)$ drawn form $\psi$, it is common knowledge that all firms hold the same posterior belief. A public information structure can be represented by a segmentation $\sigma \in \Sigma$: draw market $x\in \Delta(V)$ from $\sigma$, then draw $v$ from $x$, then privately inform each firm $x$ and the consumer $v$. 

I use the following terminology: all segmentations refer to all information structures in $\Psi$; a public segmentation refers to a information structure in $P$. If it can be represented by $\sigma$, I will call it public segmentation $\sigma$. A private segmentation refers to a information structure in $\Psi/P$.

In game $\Gamma(\psi)$, a feasible pricing rule for firm $i$ is a mapping 
$\phi_{i}: \supp \ \psi_{i} \rightarrow \Delta([c_{i},\bar{v}])$. Here, I assume firms never price strictly below their marginal cost. \footnote{If firms were allowed to charge prices below marginal cost, the consumer-optimal market segmentation is fully revealing.  With a fully revealing segmentation, the following equilibrium exists: for a consumer with value $v$, in equilibrium, firm $i$ with $i\in \argmax_{i=1,..,N} \{v_{i}-c_{i}\} $ prices at marginal cost $c_{i}$ and any other firm $j$ can set price $p_{j}$ low enough so that $v_{i}-c_{i}=v_{j}-p_{j}$. Hence, this consumer is indifferent in regard to buying from any firm. In equilibrium, she breaks ties in favor of firm $i$. Under this equilibrium, allocation is efficient, and each firm receives a profit of zero. As a result, consumer surplus is necessarily maximized.} The set of feasible pricing rules of firm $i$ in game $\Gamma(\psi)$ is denoted by $\Phi_{i}(\psi)$. 

I study perfect Bayesian equilibrium (henceforth equilibrium) of game $\Gamma(\psi)$. Notice that after firms set prices,  there are trivial sub-games where the consumer makes purchase decision. A strategy of the consumer specifies which firm's product to buy (or not buying any firm's product) in all scenarios. The set of all scenarios is $V\times [c_{i}, \bar{v}]^{N}$. Partition this set into
$E$ and $E^{c}$, where $E$ collects all scenarios with ties. Sequential rationality uniquely pins down the consumer's decision in set $E^{c}$. Hence, I only need to consider the following set of strategies for the consumer: $$T\equiv \{\tau \in (\Delta(\mathcal{N}))^{V\times [c_{i}, \bar{v}]^{N}} | \tau(v, p)(\{i^{*}\})=1, i^{*}=\argmax_{i\in \mathcal{N}} \{v_{i}-p_{i}\}, \forall (v, p)\in E^{c}\}$$ 

Here, $\mathcal{N}\equiv \{0,1,...,N\}$, $v_{0}=p_{0}=0$, $\Delta(\mathcal{N})$ is the set of all distributions on set $\mathcal{N}$. If $i$ is drawn from $\mathcal{N}$, it means the consumer choose to buy the product from firm $i$, with the convention that $0$ means not buying any firm's product. A subset of strategies for the consumer $T_{-i}$ is defined as $T_{-i}\equiv \{\tau\in T|\tau(E)\subseteq \Delta(\mathcal{N}/\{i\})\}$. Therefore, if she adopts strategy $\tilde{\tau}\in T_{-i}$, she never break ties in favor of firm $i$.

 I state the main result below.

\begin{proposition}\label{main-prop}
There exists an efficient equilibrium supported by a public segmentation with each firm receiving its minimax profit; hence, consumer surplus is maximized across equilibria supported by all segmentations. 
\end{proposition}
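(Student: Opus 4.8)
The plan is to prove the proposition in two halves: an upper bound on consumer surplus that holds across \emph{all} segmentations and equilibria, and a public-segmentation construction that attains it. For the upper bound, I would first define, for each firm $i$, the quantity
$$\underline{\pi}_i \equiv \max_{p_i \geq c_i}(p_i - c_i)\, x_*\bigl(\{v : v_i - p_i \geq u_{-i}(v)\}\bigr), \qquad u_{-i}(v) \equiv \max\{0,\ \max_{j\neq i}(v_j - c_j)\},$$
which is firm $i$'s optimal uniform-price profit when every rival prices at marginal cost and firm $i$ is uninformed. I would then argue this is a lower bound on firm $i$'s profit in every equilibrium of every $\psi \in \Psi$: regardless of the information structure, firm $i$ can ignore its signal and charge the maximizer $p_i^*$; since every rival's feasible prices satisfy $p_j \geq c_j$, any consumer with $v_i - p_i^* \geq u_{-i}(v)$ weakly prefers firm $i$ to all rivals and the outside option along every price realization, so charging $p_i^* - \varepsilon$ and letting $\varepsilon \to 0$ secures at least $\underline{\pi}_i$. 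Best response then forces the equilibrium profit to be at least $\underline{\pi}_i$. Because total surplus never exceeds the first best $W^* \equiv \int \max_i(v_i - c_i)\, dx_*$ and consumer surplus equals total surplus minus aggregate profit, it follows that consumer surplus is at most $W^* - \sum_i \underline{\pi}_i$ in every equilibrium of every segmentation, public or private.

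For the construction, I would reduce firm $i$'s problem to a textbook monopoly via the change of variables $w_i(v) \equiv v_i - u_{-i}(v)$. Let $V_i \equiv \{v : i \in \argmax_j(v_j - c_j)\}$, with ties broken by a fixed priority order, so that on $V_i$ firm $i$ is the efficient provider and $w_i(v) \geq c_i$. I would observe that, when all rivals price at cost, a consumer buys from firm $i$ at price $p$ exactly when $p \leq w_i(v)$; hence on $V_i$ firm $i$ faces a monopoly problem with valuation distribution equal to the pushforward of $x_*|_{V_i}$ under $w_i$ and marginal cost $c_i$, and the maximizer above coincides with the uniform monopoly price, with value $\underline{\pi}_i$ (consumers outside $V_i$ satisfy $w_i(v) < c_i$ and so never buy from $i$ at a price weakly above $c_i$, making them irrelevant to $\underline{\pi}_i$). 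I would then invoke BBM's uniform-profit-preserving extremal segmentation on each such monopoly market to obtain sub-segments in which (i) trade is efficient, every consumer buying from firm $i$, and (ii) firm $i$'s profit equals $\underline{\pi}_i$; lifting this $w_i$-segmentation back to $V \subseteq \mathbb{R}^N$ and taking the union over $i$ yields a single public segmentation $\sigma \in \Sigma$.

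Finally I would exhibit and verify the candidate equilibrium: in each sub-segment with dominant firm $i$, every rival $j$ prices at $c_j$, firm $i$ charges the BBM price at which all consumers in that sub-segment trade, and consumers break ties toward $i$. Firm $i$ has no profitable deviation by the defining indifference property of BBM's segments; a rival $j$ cannot profitably poach because every consumer who buys from $i$ satisfies $v_i - p_i \geq u_{-i}(v) \geq v_j - c_j$, so undercutting would require pricing below $c_j$, which is infeasible. This equilibrium is efficient and delivers each firm exactly $\underline{\pi}_i$, so consumer surplus equals the upper bound $W^* - \sum_i \underline{\pi}_i$ and the segmentation is consumer-optimal.

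I expect the main obstacle to be the reduction step and the invocation of BBM with a type-dependent outside option: establishing that the change of variables $w_i$ converts firm $i$'s strategic problem into a genuine monopoly problem to which BBM applies verbatim, that the lifted segmentation remains Bayes-plausible and keeps firm $i$ dominant within every sub-segment, and that no off-path deviation by the dominant firm to serve a strict subset of its segment is profitable. The rival-deviation and efficiency checks are comparatively routine once the dominance structure of each segment is established.
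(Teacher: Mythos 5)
Your proposal is correct and follows essentially the same route as the paper: partition $V$ according to which firm is the efficient provider, reduce each cell to a monopoly problem via the change of variables $w_i(v)=v_i-u_{-i}(v)$, apply BBM's uniform-profit-preserving extremal segmentation and lift it back, support the equilibrium with rivals pricing at cost and the dominant firm charging the minimum of each segment's support, and bound every equilibrium profit in every segmentation below by the uniform-price undercutting deviation (your $\underline{\pi}_i$ is exactly the paper's minimax profit $\pi^{m}_{i}$, and your explicit bound $W^*-\sum_i \underline{\pi}_i$ is what the paper's ``hence'' step uses implicitly). The only substantive divergence is tie handling: the paper defines $V_i$ with a strict inequality and places all tied types in a residual market $x_0$ where every firm prices at cost, precisely because your priority-order assignment can create an atom at $w_i(v)=c_i$, and a distribution with such an atom admits no segmentation satisfying the extremal definition's requirement that $p^{*}_{i}\in \supp y$ equals the set of optimal prices in every segment---so BBM does not apply quite ``verbatim'' there, though the patch (segment tied types separately at price equal to cost) is routine and changes nothing in your architecture.
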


\begin{proof}

The proof proceeds in the following three steps:

\paragraph{Step One: Segmentation Construction.}
To construct such a public segmentation, I only need to find a $\sigma \in \Sigma$ that represents this public segmentation. For each $i\in \mathcal{N}/\{0\}$, define
\begin{equation*}
   V_{i}\equiv\{v\in V|v_{i}-c_{i}>\max_{j\in \mathcal{N}/\{i\}} \{v_{j}-c_{j}\}\},
\end{equation*}
 where  $c_{0}\equiv 0$. For consumers with valuation $v\in V_{i}$, firm $i$ is the uniquely most efficient firm for her to purchase from. Define  $V_{0}\equiv V/\cup_{i\in \mathcal{N}/\{0\}}V_{i}$. 

If for some $i\in \mathcal{N}$, I have $x_{*}(V_{i})=0$, then $V_{i}$ is a zero measure set, hence can be ignored. For the rest of the proof of this proposition, I assume $x_{*}(V_{i})>0$ for all $i\in \mathcal{N}$. Cases where  $x_{*}(V_{i})=0$ for some $i\in \mathcal{N}$ can be accommodated by only applying analysis below for $i\in \mathcal{N}$ with  $x_{*}(V_{i})>0$. 

I first slice original market $x_{*}$ into several markets each with a ``dominant'' firm and a residual market. Define market $x_{i}\in \Delta(V_{i})$ as $x_{i}(B)\equiv \frac{x_{*}(B)}{x_{*}( V_{i})}$ for $B\in \mathcal{B}(V_{i})$. For each $i\in \mathcal{N}/\{0\}$, in market $x_{i}$, firm $i$ is the ``dominant'' firm. Market $x_{0}$ is the residual market. 

Then I ``map'' each market $x_{i}$ ($i\in \mathcal{N}/\{0\}$) to a monopoly market analyzed by BBM  and then apply their uniform profit preserving extremal market segmentation. The segmentation of market $x_{i}$ will be constructed from this segmentation.

Define function $w_{i}(.)\in \mathbb{R}^{V_{i}}$ as  $w_{i}(v)\equiv v_{i}-\max_{j\in \mathcal{N}/\{i\}}\{v_{j}-c_{j}\}$ for $v\in V_{i}$. Notice that from the definition of $V_{i}$ and $w_{i}(\cdot)$, I have $w_{i}(V_{i})\subseteq[c_{i}, \bar{v}]$. 
Define mapping $T: \Delta(V_{i})\rightarrow \Delta(w_{i}(V_{i}))$ as $T(x)\equiv x\circ w^{-1}_{i}$ for $x\in \Delta(V_{i})$, where $\circ$ denotes the composite operation. Mapping $T^{-1}: \Delta(w_{i}(V_{i}))\rightarrow \Delta(V_{i})$ is defined as $T^{-1}(y)\equiv y\circ w_{i}$ for $y\in \Delta(w_{i}(V_{i}))$. It is straightforward to show that $T^{-1}$ is the inverse mapping of $T$.

Distribution $T(x_{i})$ is one-dimensional, hence it corresponds to a monopoly market: a monopoly with zero marginal cost wants to sell its product to consumers with unit demand. Each consumer's valuation is drawn from distribution $T(x_{i})$. Total mass of consumers is $x_{*}(V_{i})$. If  consumers do not buy from the monopoly, they obtain a payoff of $0$. 

It is straightforward to show that the demand of this monopoly (denoted by  $D_{i}(p)$) takes the following form:
\begin{equation*}
   D_{i}(p)= x_{*}(\{v\in V_{i}|v_{i}-\max_{j\in \mathcal{N}/\{i\}}\{v_{j}-c_{j}\}\geq p\})
\end{equation*}

Let $p^{*}_{i}$ denote one of the uniform monopoly price:
\begin{equation*}
   p^{*}_{i} \in \argmax_{p\in[c_{i}, \bar{v}]} (p-c_{i})D_{i}(p)
\end{equation*}
Notice that the maximum here is attainable because $D_{i}(p)$ is decreasing and left continuous. 

Denote the uniform monopoly profit by $\pi^{*}_{i}$, and I have:
\begin{equation}
\label{eqm-profit}
   \pi^{*}_{i}= (p^{*}_{i}-c_{i})D_{i}(p^{*}_{i})
\end{equation}

The set of segmentations of monopoly market $T(x_{i})$ is defined as: 
\begin{equation*}
    M_{i} \equiv \left\{\mu \in \Delta(\Delta(w_{i}(V_{i}))) | \int_{y \in \Delta(w_{i}(V_{i}))} y(A) \mu(d y)=T(x_{i})(A), \forall A \in \mathcal{B}(w_{i}(V_{i}))\right\}
\end{equation*}

I follow BBM to define the uniform profit preserving extremal segmentation: 
\begin{definition}
A uniform profit preserving extremal segmentation of monopoly market $T(x_{i})$ is a $\mu_{i} \in M_{i}$ such that for  each $y\in \supp \ \mu_{i} $, 
\begin{equation*}
  p^{*}_{i} \in  \supp\ y =\argmax_{p\in[c_{i}, \bar{v}]} (p-c_{i}) y(w_{i}(V_{i})\cap [p,\bar{v}]), 
\end{equation*}
\end{definition}

\begin{lemma}
There exists a uniform profit preserving extremal market segmentation of the monopoly market $T(x_{i})$. 
\end{lemma}

\begin{proof}
See Theorem 1B in BBM. 
\end{proof}

The set of segmentations of market $x_{i}$ is defined as: 
\begin{equation*}
    X_{i} \equiv \left\{\chi \in \Delta(\Delta(V_{i})) | \int_{x \in \Delta(V_{i})} x(B) \chi(d x)=x_{i}(B), \forall B \in \mathcal{B}(V_{i})\right\}
\end{equation*}

Take any uniform profit preserving extremal market segmentation $\mu_{i} \in M_{i}$, define $\chi_{i}\equiv \mu_{i} \circ T$. 
\begin{lemma}
$\chi_{i}\in X_{i}$.  
\end{lemma}
\begin{proof}
For $B \in \mathcal{B}(V_{i})$ and $A=w_{i}(B)$:
$$\int_{x \in \Delta(V_{i})} x(B) \chi_{i}(d x)=\int_{x \in \Delta(V_{i})} x\circ w^{-1}_{i}(A) \chi_{i}(d x)=\int_{x \in \Delta(V_{i})} T(x)(A) \chi_{i}(d x)=$$
$$\int_{y \in \Delta(w_{i}(V_{i}))} T\circ T^{-1}(y)(A) \mu_{i}(d y)=\int_{y \in \Delta(w_{i}(V_{i}))} y(A) \mu_{i}(d y)=$$
$$T(x_{i})(A)=T(x_{i})\circ w_{i}(B)=T^{-1}\circ T(x_{i})(B)=x_{i}(B)$$
The second equality is by definition of $T$. The third equality is by changing of variable and viewing $\chi_{i}$ as a push-forward for $\mu_{i}$. The fifth equality is by $\mu_{i} \in M_{i}$. The seventh equality is by definition of $T^{-1}$. 
\end{proof}

I segment market $x_{i}$ by $\chi_{i}$.  Define $\chi_{0}\in \Delta(\Delta(V_{0}))$ as $\chi_{0}(S)\equiv \mathbb{I}_{x_{0}\in S}$ for $S\in \mathcal{B}(\Delta(V_{0}))$. Notice that trivially, $\chi_{0}$ is a segmentation of $x_{0}$. For $i\in \mathcal{N}$,  any $x\in \supp\ \chi_{i}$ is only defined on $\mathcal{B}(V_{i})$. I extend its definition to $\mathcal{B}(V)$ by defining $x(B)\equiv x(B\cap V_{i})$ for $B\in \mathcal{B}(V)$. Define $\sigma_{*}$ as $\sigma_{*}(S)\equiv \Sigma_{i\in \mathcal{N}} x_{*}(V_{i}) \chi_{i}(S\cap \Delta(V_{i}))$ for $S\in \mathcal{B}(\Delta(V))$.

\begin{lemma}
$\sigma_{*}\in \Sigma$. 
\end{lemma}
\begin{proof}
Take any $B\in \mathcal{B}(V)$, I have:
$$\int_{x \in \Delta(V)} x(B) \sigma_{*}(d x)=\Sigma_{i\in \mathcal{N}} x_{*}(V_{i}) \int_{x \in \Delta(V_{i})}x(B) \chi_{i}(d x)=\Sigma_{i\in \mathcal{N}} x_{*}(V_{i})\int_{x \in \Delta(V_{i})} x(B\cap V_{i}) \chi_{i}(d x)=$$
$$\Sigma_{i\in \mathcal{N}} x_{*}(V_{i})x_{i}(B\cap V_{i})=\Sigma_{i\in \mathcal{N}} x_{*}(B\cap V_{i})=x_{*}(B)$$

The first equality is by definition of $\sigma_{*}$. The third equality is by $\chi_{i}\in X_{i}$. The fourth equality is by definition of $x_{i}$. 
\end{proof} 

Segmentation $\sigma_{*}$ is constructed by first segmenting original market $x_{*}$ into each market $x_{i}$, and then segmenting each market $x_{i}$ further by market segmentation $\chi_{i}$. I will use public segmentation $\sigma_{*}$ to prove this proposition. To construct an equilibrium under this public segmentation, it is enough to construct an equilibrium in each market segment of $\sigma_{*}$. 

\paragraph{Step Two: Equilibrium Construction.}

\begin{lemma}
Under public segmentation $\sigma_{*}$, the following strategy profile constitutes an equilibrium in market $x_{0}$: each firm charges its marginal cost and the consumer uses any strategy $\tau\in T$. 
\end{lemma}

\begin{proof}
Notice that in market $x_{0}$, the consumer's valuation $v$ is drawn from set $V_{0}$. For any firm $i$, given equilibrium strategies of opponent firms, if firm $i$ charges any price $p>c_{i}$, no type $v$ of consumer will buy its product because $v_{i}-p<v_{i}-c_{i}\leq v_{j}-c_{j}$, for some firm $j$. Such a firm $j$ exists because $v\in V_{0}$ implies that there are at least two firms, $i$ and $j$ with $i, j\in \argmax_{k\in \mathcal{N}}\{v_{k}-c_{k}\}$. 
\end{proof}

Recall that under public segmentation $\sigma_{*}$, each market $x_{i}$ ($i\in \mathcal{N}/\{0\}$) is further segmented by $\chi_{i}$. 

\begin{lemma}
Under public segmentation $\sigma_{*}$, the following strategy profile constitutes an equilibrium in market $x_{i}$ ($i\in \mathcal{N}/\{0\}$):  
For each $x \in \supp\ \chi_{i}$, firm $i$ charges $\min \supp\ T(x)$. All other firms charge every consumer their marginal cost. The consumer makes purchase decision to maximize her payoff, and in case of ties, she breaks ties in favor of firm $i$. 

Under this equilibrium, firm $i$ has equilibrium profit $\pi^{*}_{i}$; all types of consumer buy from firm $i$; hence, other firms have zero profit. 
\end{lemma}
\begin{proof}
In such market $x_{i}$, the consumer has valuation $v$ drawn from set $V_{i}$. A consumer with valuation $v$ has equilibrium willingness to pay for firm $i$ equaling $w_{i}(v)$.  By construction of $\chi_{i}$, I have that for each  $x\in \supp\ \chi_{i}$, $T(x) \in \supp\ \mu_{i}$. In market segment $x$, the consumer's equilibrium willingness to pay for firm $i$ is distributed as $T(x)$. By $T(x) \in \supp\ \mu_{i}$ and that $\mu_{i}$ is uniform profit preserving extremal market segmentation, it is optimal for firm $i$ to charge $\min \supp\ T(x)$. Also, it is equally optimal for firm $i$ to charge $p^{*}_{i}$. Because firm $i$ charges lowest willingness to pay for itself and the consumer breaks ties in favor of firm $i$, all types of consumer in segment $x$ buy from firm $i$. Therefore, no types of consumer buy from any other firm $j$ in equilibrium in segment $x$. That is, given equilibrium strategies of other players, even if firm $j$ charge its marginal cost, no types of consumer buy its product. Hence, firm $j$ has no profitable deviation. 

Because in equilibrium, in each segment, it is equally optimal for firm $i$ to charge $p^{*}_{i}$, firm $i$ obtains the same profit as if it charges all consumers in market $x_{i}$ uniform price $p^{*}_{i}$, resulting in a profit of $\pi^{*}_{i}$. 
\end{proof}

Notice that for each firm $i\in \mathcal{N}/\{0\}$, if I aggregate its equilibrium profits across all of the markets, I will have that firm $i$ obtains a total profit of $\pi^{*}_{i}$ in equilibrium. 

\paragraph{Step Three: Equilibrium Profits are Minimax Profits.}
The minimax profit of firm $i$ across equilibria and all segmentations is defined as:
\begin{equation*}
   \pi^{m}_{i}\equiv \inf_{\{\psi\in \Psi,\phi_{-i}\in \Phi_{-i}(\psi),  \tau \in T\}} \sup_{\phi_{i}\in \Phi_{i}(\psi)} \pi_{i}(\phi_{i},\phi_{-i}, \tau)
\end{equation*}
where $\Phi_{-i}(\psi)\equiv \times_{j\in \mathcal{N}/\{0,i\}} \Phi_{j}(\psi)$ and  $\pi_{i}(\phi_{i},\phi_{-i}, \tau)$ is firm $i$'s profit under strategy profile $(\phi_{i},\phi_{-i}, \tau)$. 

\begin{lemma}\label{L2}
 In the above equilibrium supported by public segmentation $\sigma_{*}$, each firm's equilibrium profit equals to its minimax profit. 
\end{lemma}

\begin{proof}
Since $\pi^{*}_{i}$ is firm $i$'s equilibrium profit under public segmentation $\sigma_{*}$, by definition of $\pi^{m}_{i}$, we have $\pi^{*}_{i}\geq \pi^{m}_{i}$. To finish proving this Lemma, it is enough to show $\pi^{m}_{i}\geq \pi^{*}_{i}$. If $\pi^{*}_{i}=0$, this is obviously true. If $\pi^{*}_{i}>0$, then by equation (\ref{eqm-profit}), I will have $p^{*}_{i}>c_{i}$. 

For arbitrary $\psi\in \Psi, \phi_{-i}\in \Phi_{-i}(\psi), \tau \in T, \tilde{\tau}\in T_{-i}$, and $p \in [c_{i}, \bar{v}]$, 
$$\sup_{\phi_{i}\in \Phi_{i}(\psi)} \pi_{i}(\phi_{i},\phi_{-i}, \tau)\geq \pi_{i}(p,\phi_{-i}, \tau)\geq \pi_{i}(p,\phi_{-i}, \tilde{\tau})\geq \pi_{i}(p,c_{-i}, \tilde{\tau})=(p-c_{i})w_{i}(p)$$
 
 Where $w_{i}(p)$ is firm $i$'s demand under strategy profile $(p,c_{-i}, \tilde{\tau})$, and it is straightforward to show $w_{i}(p)= x_{*}(\{v\in V|v_{i}-\max_{j\in \mathcal{N}/\{i\}}\{v_{j}-c_{j}\}> p\})$.
 
The first inequality follows from $p \in \Phi_{i}(\psi)$. Here $p$ as a strategy of firm $i$ is interpreted as the pricing rule that charges uniform price $p$ after any signal.  

The second inequality is true because  $\tau$ may specify the consumer to break ties in favor of firm $i$, while $\tilde{\tau}$ never does so. 

To show the third inequality: notice that under strategy profile $(p,\phi_{-i}, \tilde{\tau})$ and $(p,c_{-i}, \tilde{\tau})$, firm $i$ prices at the same level $p$. Hence, it is sufficient to show that for $v\in V$, $j\in \mathcal{N}/\{0,i\}$, $p_{j}$ drawn from $\phi_{j}(s)$ and $s$ drawn from $\psi_{j}$, 
$$\tilde{\tau}(v, p,c_{-i})(\{i\})>0\implies  \tilde{\tau}(v, p,p_{-i})(\{i\})=1.$$
This is true because:
$$\tilde{\tau}(v, p,c_{-i})(\{i\})>0 \implies v_{i}-p>\max_{j\in \mathcal{N}/\{i\}}\{v_{j}-c_{j}\}\implies v_{i}-p>v_{j}-p_{j}.$$
The last step is by $p_{j}\geq c_{j}$. 

Thus I have $\sup_{\phi_{i}\in \Phi_{i}(\psi)} \pi_{i}(\phi_{i},\phi_{-i}, \tau)\geq (p-c_{i})w_{i}(p)$. Taking supremum over $p\in [c_{i}, \bar{v}]$ on the right-hand-side and then taking infimum over $\psi \in \Psi,\phi_{-i}\in \Phi_{-i}(\psi),  \tau\in T$ on the left-hand-side, I have $\pi^{m}_{i}\geq \sup_{p\in[c_{i}, \bar{v}]} (p-c_{i})w_{i}(p)$.

Now it is enough to show $\sup_{p\in[c_{i}, \bar{v}]} (p-c_{i})w_{i}(p)\geq \pi^{*}_{i}$. For $p\geq c_{i}$, I have $w_{i}(p)= x_{*}(\{v\in V_{i}|v_{i}-\max_{j\in \mathcal{N}/\{i\}}\{v_{j}-c_{j}\}> p\})$. Then $p^{*}_{i}>c_{i}$ will imply $\lim_{p \uparrow p^{*}_{i}} w_{i}(p)=D_{i}(p^{*}_{i})$.  This further imply $\sup_{p\in[c_{i}, \bar{v}]}  (p-c_{i})w_{i}(p)\geq (p^{*}_{i}-c_{i})D_{i}(p^{*}_{i}) =\pi^{*}_{i}$. 
\end{proof}
Step 3 completes the proof of \Cref{main-prop}. Because in each market segment, every type of consumer buys from the most efficient firm for her, the allocation is efficient. Yet, each firm's equilibrium profit coincides with its minimax profit. Therefore,  public segmentation $\sigma_{*}$ is consumer-optimal.
\end{proof}

My second result compares firms' profit under consumer-optimal segmentation to their profit under the unsegmented market. I make the following assumption to facilitate this comparison:

\begin{assumption}\label{assump}
Surplus from trade between any consumer firm pair is bounded below by a positive number $\underline{u}$, that is, $v_{i}-c_{i}\geq \underline{u}$, for all $i\in \mathcal{N}/\{0\}$ and for all  $v\in V$. 
\end{assumption}

Recall that under the consumer-optimal segmentation, in each market segment, dominated firms price at their marginal cost in equilibrium. Assumption \ref{assump} would imply that the consumer could obtain a positive payoff by purchasing from any dominated firm. Hence, she never considers not buying from any firm. By ruling out this possibility of not trading, it is easier to compare a firm's profit under these two scenarios.

\begin{proposition}
Compared to any equilibrium under the unsegmented market with positive profit for each firm, all firms are strictly worse off under the equilibrium supported by the consumer-optimal segmentation.\footnote{The requirement that all firms obtain positive profit under the unsegmented market can be satisfied if $x_{*}(V_{i})>0$ for any $i\in \mathcal{N}/\{0\}$. Lemma 6 implies that equilibrium profits cannot be lower than $\pi_{i}^{*}$, and it is straightforward to show that $\pi_{i}^{*}>0$ if $x_{*}(V_{i})>0$. } 
\end{proposition}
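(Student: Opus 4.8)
The plan is to fix a firm $i$ and compare its profit $\pi^*_i$ under the consumer-optimal segmentation $\sigma_*$ with its profit $\Pi_i$ in the given unsegmented-market equilibrium, showing $\Pi_i > \pi^*_i$. The weak inequality $\Pi_i \ge \pi^*_i$ is immediate: the unsegmented market is itself an information structure in $\Psi$, so in any of its equilibria firm $i$ best-responds and hence earns at least its minimax profit, which by \Cref{L2} is exactly $\pi^*_i$. All the work is in upgrading this to a strict inequality, and I may assume $\pi^*_i > 0$ (so $p^*_i > c_i$), since positivity is guaranteed by the hypothesis that every firm earns positive profit.

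Let $\hat w_i(p)$ denote firm $i$'s demand when it posts the uniform price $p$ against the rivals' equilibrium prices $\phi_{-i}$ and the consumers' equilibrium behavior; a type $v$ is won by firm $i$ whenever $v_i - p \ge \max_{j\in\mathcal N/\{i\}}(v_j - p_j)$. Because rivals never price below cost, $v_j - p_j \le v_j - c_j$ for every realization, so firm $i$ wins at $p$ at least whenever $v_i - p > \max_{j\neq i}(v_j - c_j)$; hence $\hat w_i(p) \ge w_i(p)$ pointwise, where $w_i$ is the minimax demand from the proof of \Cref{L2}. Since $\sup_p (p - c_i)w_i(p) = \pi^*_i$, this merely recovers the weak bound. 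The crucial point is that \Cref{assump} forces the domination to be strict in a profit-relevant way. Consider the buyers $B_i \equiv \{v : v_i - \max_{j\neq i}(v_j - c_j) \ge p^*_i\}$ that firm $i$ serves at its minimax-optimal price, a set of mass $D_i(p^*_i) > 0$. For $v \in B_i$ one has $v_i - p^*_i \ge \max_{j\neq i}(v_j - c_j) \ge \underline u > 0$, so firm $i$ already leaves these consumers surplus at least $\underline u$, and the alternative that pins down how much more it could charge them is a rival firm $j \ge 1$, never the outside option.

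I would then invoke the hypothesis that every rival earns positive profit: each rival posts a price strictly above its marginal cost with positive probability, and since an uninformed firm prices uniformly across all consumers, this markup bears on firm $i$'s buyers in $B_i$ as well. Consequently, for a positive mass of $v \in B_i$ the best-alternative value $\max_{j\neq i}(v_j - p_j)$ falls strictly below $\max_{j\neq i}(v_j - c_j)$ with positive probability, slackening firm $i$'s competitive constraint on its own customers. Firm $i$ can monetize this slack, either by raising its price slightly above $p^*_i$ while retaining the buyers whose binding rival priced high, or by poaching nearby types just outside $B_i$ that the high rival prices now deliver to it at $p^*_i$. Either way $\Pi_i \ge \max_p (p - c_i)\hat w_i(p) > \pi^*_i$. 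Because each firm earns exactly $\pi^*_i$ under $\sigma_*$, every firm is strictly worse off under the consumer-optimal segmentation.

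The main obstacle is executing this last step rigorously for an arbitrary $x_*$. One must cover both the case in which $x_*$ places positive mass on the indifference boundary $\{v_i - p^*_i = \max_{j\neq i}(v_j - c_j)\}$, where raising the price on retained buyers is the profitable deviation, and the case in which the relevant mass lies strictly outside $B_i$, where poaching at $p^*_i$ is profitable; in each one must verify that the deviating profit strictly exceeds $\pi^*_i = (p^*_i - c_i)D_i(p^*_i)$. This requires describing $\hat w_i$ as a functional of the joint law of $(v, p_{-i})$, care with tie-breaking and with the strict-versus-weak inequality separating $w_i$ from $D_i$, and handling types with several tied best alternatives, where one needs all the relevant rivals above cost simultaneously with positive probability. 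The economics, however, is transparent and already on display in the horizontal example of Section 2: \Cref{assump} rules out consumers captive to firm $i$, so the price firm $i$ extracts from its own customers is governed by a rival whose positive equilibrium markup strictly relaxes that constraint relative to the marginal-cost pricing that defines the minimax benchmark.
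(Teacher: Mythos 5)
Your weak inequality and your economic intuition are correct, but the strict inequality---which is the entire content of the proposition---is left as an acknowledged ``obstacle,'' and the route you sketch for it does not close. The missing idea is quantitative: the hypothesis of positive equilibrium profits gives you far more than ``each rival posts a price strictly above its marginal cost with positive probability.'' In the unsegmented market every price in the support of $\bar\phi_j$ must earn the equilibrium profit $\bar\pi_j$, and since demand can never exceed the total mass $1$, this forces $\supp \bar\phi_j \subseteq [\bar\pi_j + c_j, \bar v]$: every rival marks up by at least $\bar\pi_j$ \emph{with probability one}. Without this uniform almost-sure bound, your sketch is genuinely stuck: a markup that occurs only with some unquantified positive probability does not let you conclude that any particular uniform deviation price earns strictly more than $\pi^*_i$, because the deviation price applies in all events and you have no control over how often or by how much the competitive constraint slackens. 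That is exactly why you find yourself needing case distinctions (mass on the indifference boundary versus poaching nearby types) that you do not carry out.

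With the support bound in hand, the paper finishes in one stroke and none of that case analysis is needed. Set $\epsilon \equiv \min\{\underline{u}/2, \bar\pi_1, \dots, \bar\pi_N\} > 0$. For $p < p^*_i$, compare firm $i$'s deviation to the uniform price $p+\epsilon$ against rivals priced at $\bar\pi_j + c_j$ with its minimax demand $w_i(p)$ against rivals priced at cost: whenever $v_i - p > \max_{j\neq i}(v_j - c_j)$, the rivals' markups absorb the $\epsilon$, since $v_i - (p+\epsilon) > v_j - c_j - \epsilon \ge v_j - \bar\pi_j - c_j$, and Assumption \ref{assump} keeps the outside option slack, since $v_i - p - \epsilon > \underline{u} - \epsilon \ge \underline{u}/2 > 0$. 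Hence the deviation demand is at least $w_i(p)$, so $\bar\pi_i \ge (p+\epsilon-c_i)\,w_i(p)$, and letting $p \uparrow p^*_i$ gives $\bar\pi_i \ge (p^*_i+\epsilon-c_i)\,D_i(p^*_i) > \pi^*_i$. Note also a small slip in your setup: $\bar\pi_i > 0$ does not imply $\pi^*_i > 0$ (the minimax profit is a lower bound on equilibrium profit, not the reverse); the case $\pi^*_i = 0$ needs no such justification because then $\bar\pi_i > 0 = \pi^*_i$ trivially.
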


\begin{proof}
For all $i\in \mathcal{N}/\{0\}$, let $(\bar{\phi},\bar{\tau})$ be any equilibrium under the unsegmented market with equilibrium profit $\bar{\pi}_{i}>0$. Hence $\bar{\pi}_{i}=\pi_{i}(\bar{\phi},\bar{\tau})$.  I need to show $\bar{\pi}_{i}>\pi^{*}_{i}$. 

If $\pi^{*}_{i}=0$, then I immediately have $\bar{\pi}_{i}>\pi^{*}_{i}$. 
If $\pi^{*}_{i}>0$, then by equation (\ref{eqm-profit}), I have $p^{*}_{i}>c_{i}$ and $D_{i}(p^{*}_{i})>0$. 

For $i\in \mathcal{N}/\{0\}$, $supp\ \bar{\phi}_{i}\subseteq [\bar{\pi}_{i}+c_{i}, \bar{v}]$.  Because, $\bar{\pi}_{i}=\pi_{i}(p,\bar{\phi}_{-i},\bar{\tau})\leq (p-c_{i}), \forall p\in \supp\ \bar{\phi}_{i}$. 

For any $p\in [c_{i}, p^{*}_{i})$, $\tilde{\tau}\in T_{-i}$ and $\epsilon \triangleq \min \{\frac{\underline{u}}{2}, \bar{\pi}_{1},...,\bar{\pi}_{N}\}$, I have: 
$$\pi_{i}(\bar{\phi}_{i}, \bar{\phi}_{-i},\bar{\tau})\geq \pi_{i}(p+\epsilon, \bar{\phi}_{-i},\bar{\tau}) \geq \pi_{i}(p+\epsilon, \bar{\phi}_{-i},\tilde{\tau})\geq \pi_{i}(p+\epsilon, \bar{\pi}_{-i}+c_{-i},\tilde{\tau}).$$ Here, vector $\bar{\pi}_{-i}+c_{-i}$ denotes that each firm $j$ other than firm $i$ prices at $\bar{\pi}_{j}+c_{j}$. These three inequalities follow the same logic as the proof of the three corresponding inequalities in the proof of Lemma 6. 

My proof of this proposition will be done if I have $x_{*}(v\in V|v_{i}-p-\epsilon>\max_{j\in \mathcal{N}/\{i\}} \{v_{j}-\bar{\pi}_{j}-c_{j}\})\geq w_{i}(p)$. Because:
$$\pi_{i}(\bar{\phi}_{i}, \bar{\phi}_{-i},\bar{\tau})\geq \lim_{p \uparrow p^{*}_{i}}\pi_{i} (p+\epsilon, \bar{\pi}_{-i}+c_{-i},\tilde{\tau})=(p^{*}_{i}+\epsilon-c_{i}) \lim_{p \uparrow p^{*}_{i}} x_{*}(v\in V|v_{i}-p-\epsilon>\max_{j\in \mathcal{N}/\{i\}} \{v_{j}-\bar{\pi}_{j}-c_{j}\})$$
$$>(p^{*}_{i}-c_{i}) \lim_{p \uparrow p^{*}_{i}}w_{i}(p)=(p^{*}_{i}-c_{i})D_{i}(p^{*}_{i})=\pi^{*}_{i}.$$
Here, the second step is by $\pi_{i}(p+\epsilon,\bar{\pi}_{-i}+c_{-i},\tilde{\tau})=(p+\epsilon-c_{i})x_{*}(v\in V|v_{i}-p-\epsilon>\max_{j\in \mathcal{N}/\{i\}} \{v_{j}-\bar{\pi}_{j}-c_{j}\})$. Notice that $x_{*}(v\in V|v_{i}-p-\epsilon>\max_{j\in \mathcal{N}/\{i\}} \{v_{j}-\bar{\pi}_{j}-c_{j}\})$ and $w_{i}(p)$ have left limits because they are both decreasing in $p$. The third and the fourth steps are by $\lim_{p \uparrow p^{*}_{i}} w_{i}(p)=D_{i}(p^{*}_{i})$, $p^{*}_{i}>c_{i}$ and $D_{i}(p^{*}_{i})>0$.

To show $x_{*}(v\in V|v_{i}-p-\epsilon>\max_{j\in \mathcal{N}/\{i\}} \{v_{j}-\bar{\pi}_{j}-c_{j}\})\geq w_{i}(p)$: notice that left-hand-side is firm $i$'s demand under strategy profile $(p+\epsilon,\bar{\pi}_{-i}+c_{-i},\tilde{\tau})$ and right-hand-side is firm $i$'s demand under strategy profile $(p,c_{-i},\tilde{\tau})$. Therefore, it is sufficient to show that for $v\in V$: $$\tilde{\tau}(v, p,c_{-i})(\{i\})>0\implies  \tilde{\tau}(v, p+\epsilon,\bar{\pi}_{-i}+c_{-i})(\{i\})=1.$$
 This is true because for $j\in \mathcal{N}/\{0,i\}$: 
 $$\tilde{\tau}(v, p,c_{-i})(\{i\})>0\implies   v_{i}-p>\max_{j\in \mathcal{N}/\{i\}} \{v_{j}-c_{j}\} \implies v_{i}-p-\epsilon>v_{j}-c_{j}-\epsilon\geq \underline{u}-\epsilon\geq \frac{\underline{u}}{2}>0\ \&$$
$$v_{i}-p-\epsilon >v_{j}-c_{j}-\epsilon\geq v_{j}-\bar{\pi}_{j}-c_{j}. $$

\end{proof}

\section{Conclusion}

This paper has shown how oligopolistic markets with product differentiation can be segmented to amplify competition. By slicing the market into vertical components, firms compete in several vertical markets, each with a dominating firm favored by all consumers in that vertical market. Realizing the dominant firm's advantage, dominated firms compete maximally for consumers' business by pricing at their marginal cost. Thus, the dominant firm knows that other firms are trying to poach its consumers. To induce this dominating firm to charge low prices, I apply the uniformly profit preserving extremal segmentation from BBM  to each vertical market. In this way, information generates an equilibrium in which each firm's profit is its minimax profit across all equilibria and segmentations, and nevertheless, the equilibrium allocation is efficient. Thus, this public segmentation is consumer-optimal among all segmentations. 

\bibliography{re}

\end{document}